\documentclass{sig-alternate}

\usepackage{verbatim}
\usepackage{amsmath}
\usepackage{amssymb}
\usepackage{url}
\usepackage{array}
\usepackage{booktabs}
\usepackage{afterpage}
\usepackage[dvipsnames]{color}
\usepackage{mfirstuc}
\usepackage{subfigure}
\usepackage{times} % this package saves space!
\makeatletter
\newif\if@restonecol
\makeatother

\usepackage[lined,algonl,boxed]{algorithm2e}%\usepackage{mathbb}
\usepackage{epsfig}
\usepackage{multirow}
\usepackage{rotating}
\usepackage{enumitem}

\newcommand{\omt}[1]{}

\title{Graph Cluster Randomization: \\ Network Exposure to Multiple Universes}
%\subtitle{[DRAFT, April 17, 2013]}

\numberofauthors{1}
\author{
    \alignauthor 
    \vspace*{-0.2in} 
        Johan Ugander \hspace{1.0cm}
        Brian Karrer \hspace{1.1cm}
        Lars Backstrom \hspace{1.0cm}
        Jon Kleinberg\hspace*{0.25cm} \\
    \affaddr{
        % \hspace{5mm}
        Cornell University\hspace{1.5cm}
        Facebook \hspace{2.2cm}
        Facebook \hspace{1.7cm}
        Cornell University
        \hspace*{0.05cm}
        } \\
    \email{ {
      \hspace*{0.1cm}
      jhu5@cornell.edu \hspace{1.5cm}
      \{briankarrer,lars\}@fb.com \hspace{0.9cm}
      kleinber@cs.cornell.edu}}
}

\date{}

\newtheorem{theorem}{Theorem}[section]

\newtheorem{definition}[theorem]{Definition}
\newtheorem{proposition}[theorem]{Proposition}

\newcommand{\Cov}{{\text{Cov}}}
\newcommand{\Var}{{\text{Var}}}

\begin{document}

\maketitle
%\vspace{-0.2in}

\begin{abstract}
A/B testing is a standard approach for evaluating the effect of 
online experiments; the goal is to estimate the `average treatment effect'
of a new feature or condition by exposing a sample of the overall
population to it.
A drawback with A/B testing is that it is poorly suited for
experiments involving social interference, when the treatment of
individuals spills over to neighboring individuals along an underlying
social network. In this work, we propose a novel methodology using
graph clustering to analyze average treatment effects under
social interference. To begin, we characterize graph-theoretic
conditions under which individuals can be considered to be `network
exposed' to an experiment.  We then show how graph cluster
randomization admits an efficient exact algorithm to compute the
probabilities for each vertex being network exposed under
several of these exposure conditions. Using these probabilities as
inverse weights, a Horvitz-Thompson estimator can then provide an
effect estimate that is unbiased, provided that the exposure model has
been properly specified.

Given an estimator that is unbiased, we focus on minimizing 
the variance.
First, we develop simple sufficient conditions for the variance 
of the estimator to be asymptotically small in $n$, the size of the graph. 
However, for general randomization schemes, 
this variance can be {\it lower bounded} by an {\it exponential} function of the degrees of a graph. 
In contrast, we show that if a graph satisfies a {\it restricted-growth condition}
on the growth rate of neighborhoods, then there
exists a natural clustering algorithm, based on vertex
neighborhoods,
for which the variance of the estimator can be {\it upper bounded} 
by a {\it linear} function of the degrees. 
Thus we show that proper cluster randomization can lead to 
exponentially lower estimator variance when experimentally 
measuring average treatment effects under interference.
\end{abstract}

\vspace{-0.15cm}
\section{Introduction}

\newcommand{\xhdr}[1]{\paragraph*{\bf {#1}}}

Social products and services -- from fax machines and cell phones to
online social networks -- inherently exhibit `network effects' with
regard to their value to users. The value of these products to a user
is inherently non-local, since it typically grows as members of the
user's social neighborhood use the product as well.
Yet randomized experiments
(or `A/B tests'), the standard machinery of testing frameworks
including the Rubin causal model
\cite{Rubin1974}, critically assume what is known as the `stable unit
treatment value assumption' (SUTVA), that each individual's response
is affected only by their own treatment and not by the treatment of
any other individual.
Addressing this tension between the formalism of A/B testing 
and the non-local effects of network interaction has emerged
as a key open question in the analysis of on-line behavior
and the design of network experiments \cite{fienberg2012}.

Under ordinary randomized trials where the stable unit treatment
value assumption is a reasonable approximation --- for example when a
search engine A/B tests the effect of their color scheme upon the
visitation time of their users --- the population is divided into two 
groups: those in the `treatment' group who see the new color scheme A and 
those in the control group who see the default color scheme B. 
Assuming there are negligible interference effects between users,
each individual in the treated group responds just as he or she would if the entire population were
treated, and each individual in the control group responds just as he or she would
if the entire population were in control. In this manner, we
can imagine that we are observing results from samples of two
distinct `parallel universes' at the same time --- `Universe A' in
which color scheme A is used for everyone, and 
`Universe B' in which color scheme B is used for everyone ---
and we can make inferences about the properties of user behavior
in each of these universes.

This tractable structure changes dramatically when the behavior of one
user $i$ can have a non-trivial effect on the behavior of another user $j$ ---
as is the case when the feature or product being tested has any kind of
social component. 
Now, if $i$ is placed in Universe A and $j$ is placed in Universe B,
then our analysis of $i$'s behavior in A is contaminated by 
properties of $j$'s behavior in B, and vice versa; we no longer have
two parallel universes.

\xhdr{Average Treatment and Network Exposure}
Our goal is to develop techniques for analyzing the
average effect of a treatment on a population when such interaction 
is present.
As our basic scenario, we imagine testing a service by providing
it to a subset of an underlying population; the service has a
`social' component in that
$i$'s reaction to the service depends on whether a neighbor $j$
in the social network also has the service. We say that an
individual is in the {\em treatment group} if the individual is provided 
with the service for the test, and in the {\em control group} otherwise.
There is an underlying numerical response variable of interest (for example,
the user's time-on-site in each condition), and we want to estimate
the average of this response in both the universe where everyone
has the service, and the universe where no one has the service,
despite the fact that --- since the population is divided
between treatment and control --- 
we don't have direct access to either universe.

We express this question using a formalism introduced by 
Aronow and Samii for causal inference without this
stable unit treatment value assumption \cite{AS2012}, 
with strong similarities to similar formalism introduce by Manski \cite{Manski2013},
and adapt it to the problem of interference on social networks.
Let $\vec{z} \in \{0,1\}^n$ be the treatment assignment vector, 
where $z_i = 1$ means that user $i$ is in the 
treatment group and $z_i = 0$ means the user is in the control.  
Let $Y_i(\vec{z}) \in \mathbb{R}$ be the potential outcome of 
user $i$ under the treatment assignment vector $\vec{z}$.  
The fundamental quantity we are interested in is the average treatment effect, $\tau$, between the two diametrically opposite 
universes $\vec z = \vec 1$ and $\vec z' = \vec 0$,
\begin{eqnarray}
\tau(\vec z = \vec 1, \vec z' = \vec 0) = \frac{1}{n}{\sum_{i=1}^n \left[ Y_i(\vec{z} = \vec 1 )-Y_i(\vec{z}' = \vec 0)\right]}.
\label{eq:avg-effect}
\end{eqnarray}
This formulation contains the core problem discussed in informal terms
above: unlike ordinary A/B testing, no two users can ever truly be 
in opposing universes at the same time.

A key notion that we introduce for evaluating (\ref{eq:avg-effect})
is the notion of {\em network exposure}.  We say that
$i$ is `network exposed' to the treatment under a particular 
assignment $\vec{z}'$ if $i$'s response under $\vec{z}'$ is the same
as $i$'s response in the assignment $\vec{1}$, where everyone receives
the treatment.\footnote{We also discuss adaptations to the case where the 
responses in these two cases differ only by a small parameter $\varepsilon$.}
We define network exposure to the control condition analogously.

With this definition in place, we can investigate several possible
conditions that constitute network exposure.
For example, one basic condition would be to say that $i$ is
network exposed to the treatment if $i$ and all of $i$'s neighbors are treated.
Another would be to fix a fraction $q > 0$ and say that $i$
is network exposed if $i$ and at least a $q$ fraction of $i$'s
neighbors are treated.
The definition of network exposure is fundamentally a modeling decision
by the experimenter, and in this work we introduce several families of
exposure conditions, each specifying the sets of
assignment vectors in which a user is assumed to be `network exposed' to the
treatment and control universes, providing several characterizations
of the continuum between the two universes. Choosing network exposure
conditions is crucial because they specify when we can observe the
potential outcome of a user as if they were in the treatment or
control universe, without actually placing all users into the
treatment or control universe.

\xhdr{Graph Cluster Randomization}
Following the formulation of network exposure,
a second key notion that we introduce
is a generic graph randomization scheme based on graph
clustering, which we call {\it graph cluster randomization}.
At a high level,  graph cluster randomization is a technique in 
which the graph is partitioned into a set of {\em clusters},
and then randomization between treatment and control is performed
at the cluster level.  
The probability that a vertex is network exposed to treatment or control
will then typically involve a graph-theoretic question about the
intersection of the set of clusters with the 
local graph structure near the vertex.
We show how it is
possible to precisely determine the non-uniform probabilities of
entering network exposure conditions under such randomization. Using
inverse probability weighting \cite{HT1952}, we are then able to
derive an unbiased estimator of the average treatment effect $\tau$
under any network exposure for which we can explicitly compute
probabilities.

We motivate the power of graph cluster randomization by furnishing
conditions under which graph cluster randomization will produce
an estimator with asymptotically small variance. 
First, we observe that if the graph has bounded degree 
and the sizes of all the clusters remain bounded independent of 
the number of vertices $n$,
then the estimator variance is $O(1/n)$, a simple but illustrative
sufficient condition for smallness. 
The key challenge is the dependence on the degrees --- in general,
a collection of bounded-size clusters can produce a variance that
grows exponentially in the vertex degrees.
More precisely, when performing graph cluster randomization with 
single-vertex clusters, the variance of the estimator admits a {\it
lower bound} that depends {\it exponentially} on the degrees. 
This raises the important algorithmic question of how to choose 
the clustering: bounded-size
clusters provide asymptotically small variance in the number of
vertices $n$, but if the clusters are not chosen carefully 
then we get an exponential dependence on the vertex degrees
which could cause the variance to be very large in practice.

\xhdr{Cluster Randomization in Restricted-Growth Graphs}
We identify an important class of graphs, which we call
{\em restricted-growth graphs}, 
%and includes 
%several standard network models,
on which a non-trivial clustering algorithm 
admits an {\it upper bound} on the
estimator variance that is {\it linear} in the degrees of the graph. 
The restricted-growth condition that we introduce for graphs is
an expansion of the bounded-growth condition previously
introduced for studying nearest-neighbor algorithms in metric spaces
\cite{karger2002}, designed to include low-diameter
graphs in which neighborhoods can grow exponentially.
Formally, let $B_{r}(v)$ be the set of vertices within $r$
hops of a vertex $v$; our restricted-growth condition says
that there exists a constant $\kappa$, independent of the degrees of
the graph, such that for all vertices $v$ and all $r>0$, we have
$|B_{r+1}(v)| \leq \kappa |B_r(v)|$. 
Note the comparison to the standard bounded-growth definition, 
which requires $|B_{2r}(v)| \leq \kappa |B_r(v)|$, a much stronger
condition and not necessary for our results to hold.

For restricted-growth graphs, we provide a clustering algorithm
for which the estimator variance grows only linearly in the degree.
The challenge is that the variance can grow exponentially with
the number of clusters that intersect a vertex's neighborhood;
our approach is to form clusters from balls of fixed radius grown 
around a set of well-separated vertices.
The restricted growth condition prevents balls from packing too
closely around any one vertex, thus preventing vertex neighborhoods from
meeting too many clusters.
We note that for the special case of restricted-growth graphs that
come with a uniform-density embedding in Euclidean space, one can use
the locations of vertices in the embedding to carve up the space into
clusters directly; the point, as in work on the nearest-neighbor problem
\cite{karger2002}, is to control this carving-up at a graph-theoretic
level rather than a geometric one, and this is what our technique does.

Our class of restricted-growth graphs provides an attractive
model for certain types of real-world graphs. 
%First, we observe that
Restricted-growth graphs include graphs for which there exists an
embedding of the vertices with approximately uniform density
in a Euclidean space of bounded dimension, such as lattices
or random geometric graphs, where edges connect neighbors within some
maximal metric distance. 
% Next, 
%the Watts-Strogatz model of small-world networks \cite{watts1998},
%which adds a constant number of long-range edges to a lattice,
%nearly exhibits restricted-growth. We show a variance bound 
%for Watts-Strogatz graphs that holds
%as long as the cluster construction is performed 
%in the absence of the long-range links. 

\xhdr{Summary}
Our work thus occupies a mediating perch between recent work from the
statistical literature on causal inference under interference
\cite{AKTR2012,AS2012,TV2012}, as well as recent work from the
computer science literature on network bucket testing
\cite{BK2011,KLS2012}. Our contribution extends upon the ordinary
inference literature by developing exposure models and randomization
schemes particularly suited for experiments on large social graphs,
also showing how previous approaches are intractable. 
Meanwhile, we show that reducing estimator variance involves 
non-trivial graph-theoretic considerations, and we introduce 
a clustering algorithm that improves exponentially on baseline
randomization schemes. Our
contribution also connects to existing work on network bucket testing
by contributing an exposure framework for the full graph and a
randomization scheme that is capable of considering multiple exposure
conditions at once, a necessity for true concurrent causal
experimentation. 

In Section 2 we describe our models of network exposure. In Section 3 we present our graph cluster randomization scheme, an algorithm for efficiently computing exposure probabilities, and an unbiased estimator of average treatment effects under graph cluster randomization. In Section 4 we introduce restricted-growth graphs, and show how the estimator has a variance that is linearly bounded in degree for such graphs. Section 5 concludes. 
 
\section{Network exposure models}

For A/B randomized experiments, the {\it treatment condition} of an
individual decides whether or not they are subject to an intervention.
This typically takes two values: `treatment' or `control'. In most
randomized experiments, the experimenter has explicit control over how
to randomize the treatment conditions, and generally individuals are
assigned independently. Meanwhile, the {\it exposure condition} of an
individual determines how they experience the intervention in full
conjunction with how the world experiences the intervention.  Without
the stable unit treatment value assumption, at worst each of the $2^n$
possible values of $\vec z$ define a distinct exposure condition for each
user. Aronow and Samii call this ``arbitrary exposure'' \cite{AS2012},
and there would be no tractable way to analyze experiments under
arbitrary exposure.

Consider the potential outcomes for user $i$. In the ``arbitrary exposure" case,
$Y_i(\vec{z})$ is completely different for every possible $\vec{z}$.  
This means that we will never be able to observe $Y_i(\vec{z})$ 
for either
$\vec{z}=\vec{1}$ or $\vec{z}=\vec{0}$ without putting all users into the 
treatment or control universes.  
Thus, to make progress on estimating the average treatment effect under
any other conditions, we require further assumptions.
We do this here by assuming that multiple treatment vectors $\vec{z}$
can map to the same potential outcomes: essentially, as long as
treatment vectors $\vec{z}$ and $\vec{z}'$ are ``similar enough''
from the perspective of a vertex $i$, in a sense to be made precise below,
then $i$ will have the same response under $\vec{z}$ and $\vec{z}'$.

Specifically, 
let
$\sigma^x_i$ be the set of all assignment vectors $\vec{z}$
for which $i$ experiences outcome $x$.
We refer to $\sigma^x_i$ as an {\em exposure condition} for $i$;
essentially, $\sigma^x_i$ consists of a set of assignment vectors
that are ``indistinguishble'' from $i$'s point of view, in that
their effects on $i$ are the same.
Our interest is in the particular exposure conditions
$\sigma^1_i$ and $\sigma^0_i$, which we define to be the sets that contain $\vec{z}= \vec 1$ 
and $\vec{z}= \vec 0$ respectively. 
In this way, we are assuming that for all $\vec z_1 \in \sigma^1_i$, 
we have $Y_i(\vec z = \vec z_1) = Y_i(\vec{z}=\vec 1)$, and for all 
$
\vec z_0 \in \sigma^0_i$, we have $Y_i(\vec z = \vec z_0) = Y_i(\vec{z}=\vec 0)
$.\footnote{
If this strikes the reader as too restrictive a definition of ``exposure condition'', consider instead partitioning the space of potential outcomes (rather that partitioning the space of assignment vectors) using small $\epsilon$-sized bins, and define the ``exposure conditions'' as all assignment vectors that produce a potential outcome in that $\epsilon$ bin. 
In cases where no other potential outcomes correspond to the outcomes for $\vec z = \vec 0 $ or $\vec z = \vec 1$, it may be more appropriate to manage bias using $\epsilon$ distances on potential outcomes this way.
}
Note that it is possible that $\vec{z}=\vec 1$ and $\vec{z}=\vec 0$ belong to the same exposure condition and that $\sigma^1_i=\sigma^0_i$, which corresponds to a treatment that has no effects.

We define an {\it exposure model} for user $i$ as a set of exposure conditions that completely partition the possible assignment vectors $\vec{z}$. The set of all models, across all users, is the exposure model for an experiment.
For our purposes though, it is unnecessary to entirely specify an exposure model, since we are only trying to determine the average treatment effect between the extreme universes. 
We only care about the exposure conditions $\sigma^1_i$ and $\sigma^0_i$ 
for which each user $i$ experiences exposure to the treatment or control universe\footnote{If one was to assume functional relationships between the potential outcomes in different exposure conditions then other exposure conditions besides $\sigma^1_i$ and $\sigma^0_i$ could become relevant.}.

Of course, the {\it true} exposure conditions $\sigma^1_i$ and $\sigma^0_i$ for each user are not known to the experimenter a priori, and analyzing the results of an experiment requires choosing such conditions in our framework. If the wrong exposure conditions are chosen by the experimenter, what happens to the estimate of the average treatment effect? If users are responding in ways that do not correspond to $\vec{z}=\vec 1$ and $\vec{z}=\vec 0$, we will be introducing bias into the average treatment effect. The magnitude of this bias depends on how close the outcomes actually observed are to the outcomes at $\vec{z}=\vec 1$ and $\vec{z}=\vec 0$ that we wanted to observe. It may even be favorable to allow such bias in order to lower variance in the results of the experiment.

\xhdr{Neighborhood Exposure}
We now describe some general exposure conditions that we use in
what follows.
In particular, we focus primarily on {\em local exposure conditions},
where two assignments are indistinguishable to $i$ if they agree
in the immediate graph neighborhood of $i$.
We consider absolute and fractional conditions on the number of treated neighbors.  Note we are not asserting that these possible exposure conditions are the {\it actual} exposure conditions with respect to the actual potential outcomes in an experiment, but rather that they provide useful abstractions for the analysis of an experiment, where again the degree of bias introduced depends on how well the exposure conditions approximate belonging to the counterfactual universes.
\vspace{-0.1cm}
\begin{itemize}[leftmargin=8pt]
\itemsep2pt 
\parskip1pt 
\parsep2pt
\labelsep2pt
\item {\em Full neighborhood exposure:} 
Vertex $i$ experiences full neighborhood exposure to a treatment condition if $i$ and all $i$'s neighbors receive that treatment condition.
\item {\em Absolute $k$-neighborhood exposure:}
Vertex $i$ of degree $d$, where $d \ge k$, experiences absolute $k$-neighborhood exposure to a treatment condition if $i$ and $\ge k$ neighbors of $i$ receive that treatment condition. 
\item {\em Fractional $q$-neighborhood exposure:}
Vertex $i$ of degree $d$ experiences fractional $q$-neighborhood exposure to a treatment condition if $i$ and $\ge \! qd$ neighbors of $i$ receive that treatment condition.
\end{itemize}
\vspace{-0.1cm}
The $k$-absolute and $q$-fractional neighborhood exposures can be considered relaxations of the full neighborhood exposure for vertex $i$ in that 
they require fewer neighbors of $i$ to have a fixed treatment condition for $i$ to be considered as belonging to that exposure condition.  In fact, the set of assignment vectors that correspond to $k$-absolute and $q$-fractional neighborhood exposures are each nested under the parameters $k$ and $q$ respectively.  Increasing $k$ or $q$ decreases the set of assignment vectors until reaching full neighborhood exposure for vertex $i$.

It is natural to consider heterogeneous values $k$ or $q$ --- values that differ for each user -- but
we limit our discussion to exposure conditions that are homogeneous across users as much as possible. We do incorporate a mild heterogeneity in the definition of $k$-neighborhood exposure when vertices have degree $d<k$: for these vertices we consider full neighborhood exposure instead.   Fractional exposure does not require this adjustment.

\xhdr{Core Exposure}
Full neighborhood exposure is clearly only an approximation of full immersion in a universe. 
Beyond local exposure conditions, 
we also consider exposure condition with global dependence. As one approach, consider individuals as exposed to a treatment only if they are sufficiently surrounded by sufficiently many treated neighbors who are in turn also surrounded by sufficiently many treated neighbors, and so on. This recursive definition may initially appear intractable, but such recursive exposure can in fact be characterized precisely 
by analyzing the $k$-core --- and more generally the 
heterogeneous $k$-core --- on the induced graph of treatment 
and control individuals.

Recall that the $k$-core of a graph $G=(V,E)$ is the maximal subgraph of $G$ in which all vertices have degree at least $k$ \cite{Bollobas2001}. Similarly, the heterogeneous $\mathbf{k}$-core of a graph $G=(V,E)$, parameterized by a vector $\mathbf{k} = (k_1,\ldots,k_{|V|})$, is the maximal subgraph $H=(V',E')$ of $G$ in which each vertex $v_i \in V'$ has degree at least $k_i$ \cite{CLDG2012}. Using the definition of heterogeneous $\mathbf{k}$-core, we introduce the following natural fractional analog.
\begin{definition}[Fractional $q$-core]
The fractional $q$-core is the maximal subgraph $H=(V',E')$ of $G=(V,E)$ in which each vertex $v_i \in V'$ is connected to at least a fraction $q$ of the vertices it was connected to in $G$. Thus, for all $v_i \in V'$, $\deg_H(v_i)  \ge q \deg_G(v_i)$. 
Equivalently, if $d_i$ is the degrees of vertex $i$, the fractional $q$-core is the heterogenous $\mathbf{k}$-core of $G$ for $\mathbf{k}=(qd_1,\ldots,qd_{|V|})$.
\end{definition}

Since the heterogeneous $\mathbf{k}$-core is a well-defined object, so is
the fractional $q$-core.
Using this definition, we now define exposure conditions that are all 
stricter versions of corresponding earlier neighborhood conditions.

\vspace{-0.12cm}
\begin{itemize}[leftmargin=8pt]
\itemsep2pt 
\parskip5pt 
\parsep3pt
\labelsep2pt
\item {\em Component exposure:}
Vertex $i$ experiences component exposure to a treatment condition if $i$ and all of the vertices in its connected component receive that treatment condition. 
\item {\em Absolute $k$-core exposure:}
Vertex $i$ with degree $d \ge k$ experiences absolute $k$-core exposure to a treatment condition if $i$ belongs to the 
$k$-core of the graph $G[V']$, the subgraph of $G$ induced on the vertices $V'$ that receive that treatment condition.
\item {\em Fractional $q$-core exposure:}
Vertex $i$ experiences fractional $q$-core exposure to a treatment condition if $i$ belongs to the fractional $q$-core of the graph $G[V']$, the subgraph of $G$ induced on the vertices $V'$ that receive that treatment condition. 
\end{itemize}
\vspace{-0.12cm}

Component exposure is perhaps the strongest requirement for network exposure imaginable, and it is only feasible if the interference graph being studied is comprised of many disconnected components. We include it here specifically to note that the fractional $q$-core exposure for $q=1$ reduces to component exposure.  Again like the neighborhood exposure case, absolute core exposure requires heterogeneity in $k$ across users for it to be a useful condition for all users.  A parsimonious solution analogous to the solution for $k$-neighborhood exposure may be to consider heterogeneous max(degree, $k$)-core exposure. Fractional $q$-core exposure, like fractional $q$-neighborhood exposure, is again free from these parsimony problems.

Core exposure conditions are strictly stronger than the associated neighborhood exposure conditions above.  In fact, every assignment vector in which a vertex $i$ would be component or core exposed corresponds to neighborhood exposure, but not vice versa.  So the assignment vectors of core and component exposure are entirely contained in those of the associated neighborhood exposure.

\xhdr{Other Exposure Conditions} 
Other exposure conditions may prove relevant to particular 
applications. In particular, we draw attention to the intermediate concept of placing absolute or fractional conditions on the population of vertices within $h$ hops, where $h=1$ is the neighborhood exposure conditions above. We also note that on social networks with very high degree, for many applications it may be more relevant to define the exposure conditions in terms of a lower degree network that considers only stronger ties.

\section{Randomization and Estimation}
Using the concept of network exposure, we can now consider estimating the average treatment effect $\tau$ between the two counterfactual universes using a randomized experiment.  Recall that $\vec z$ is the treatment assignment vector of an experiment. To randomize the experiment, let $\vec z$ be drawn from $Z$, a random vector that takes values on $\{0,1\}^n$, the range of $\vec z$. The distribution of $Z$ over $\{0,1\}^n$ given by $\Pr(Z = \vec{z})$ is what defines our randomization scheme, and it is also exactly what determines the relevant probabilities of network exposure. For a user $i$, $\Pr(Z \in \sigma^1_i)$ is the probability of network exposure to treatment and $\Pr(Z \in \sigma^0_i)$ is the probability of network exposure to control.  

In general, these probabilities will be different for each user and each treatment condition, and knowing these probabilities makes it possible to correct for allocation bias during randomization. In particular, it becomes possible to use the Horvitz-Thompson estimator, $\hat{\tau}$, to obtain an unbiased estimate of $\tau$, here given by
\begin{eqnarray}
\hat{\tau}(Z) = \frac{1}{n}{\sum_{i=1}^n \left( \frac{Y_i(Z)\mathbf{1}[Z \in \sigma^1_i]}{\Pr(Z \in \sigma^1_i)} -\frac{Y_i(Z)\mathbf{1}[Z \in \sigma^0_i]}{\Pr(Z \in \sigma^0_i)}\right)},
\end{eqnarray}
where $\mathbf{1}[x]$ is the indicator function. Assuming the probabilities are positive, the expectation over $Z$ clearly gives $\tau$, though note that this does assume that the exposure conditions are not misspecified.

Let us examine the exposure probabilities for the simplest network
exposure condition, full neighborhood exposure, and under the simplest
randomization scheme --- independent vertex randomization, in which 
each vertex is independently assigned to treatment or control. If all
vertices are treated independently with probability $p \in (0,1)$ then the
probability of full neighborhood exposure to treatment for a user $i$
of degree $d_i$ is simply given by $\Pr(Z \in \sigma^{1}_i) = p^{d_i+1}$,
and the probability of full neighborhood exposure to control is given
by $\Pr(Z \in \sigma^{0}_i) = (1-p)^{d_i+1}$. This highlights
the main challenge of network exposure: the chance that a vertex with
high degree manages to reach full neighborhood exposure, or anywhere
near it, can be exponentially small in $d_i$. Intuitively, such small
exposure probabilities will dramatically increase the variance of the
Horvitz-Thompson estimator, and it indicates the necessity of using
more intelligent randomization.

To reduce the variance of this Horvitz-Thompson estimator, 
we introduce a general {\em graph cluster randomization} approach,
creating graph clusters and randomizing assignment
at the cluster level
rather than at the vertex level, with clusters assigned independently.
Connected vertices
will then be assigned to the same treatment condition 
more often than would happen with independent assignment, 
increasing the expected number of users who are network 
exposed to a condition at the cost of increased correlations between users' exposure conditions.

For clarity when discussing clustering, we introduce some notation.
Let the vertices be partitioned into $n_c$ clusters $C_1,\ldots,C_{n_c}$.
Let $N_i \subseteq V$ denote the neighbors of $i$ in the graph $G$, and let
$S_i  = \{ C_j : (i \cup N_i) \cap C_j \neq \emptyset \}$
denote the set of clusters that contain $i$ or a neighbor of $i$;
we call $S_i$ the set of clusters to which $i$ is {\em connected}.
Using this notation, 
we will now examine the probabilities of different network exposures. 

For the general creation of clusters we defer to the literature on 
algorithms for graph partitioning and community detection \cite{Fortunato2010,Ugander2013}. 
In Section \ref{s:growth} we describe a particular algorithm for clustering
graphs that satisfy a restricted-growth condition. The remainder of this section,
however, describes the behavior of an arbitrary clustering on an arbitrary graph.

\subsection{Exposure probabilities}

We now examine how the probabilities of network exposure 
can be computed given a clustering. 
As a simple example, for the full neighborhood exposure condition, 
the probability
of network exposure to treatment simply becomes $\Pr(Z \in \sigma^0_i) = p^{|S_i|}$
 and to control becomes $\Pr(Z \in \sigma^1_i) = (1-p)^{|S_i|}$. 
We now show that computing the exposure probabilities for 
absolute and fractional neighborhood exposure conditions is tractable as well. 

Consider the challenge of computing the probability that vertex $i$ with degree $d_i$ is treated and more than $k$ of its neighboring vertices are treated under cluster randomization. This applies when considering both absolute and fractional neighborhood exposures. First, let us reindex the clusters such that if $i$ is connected to $|S_i| = s$ clusters, $i$ itself resides on cluster $s$, and we let $j=1,\ldots,s-1$ denote the other connected clusters. Let $w_{i1},\ldots,w_{is}$ be the number of connections $i$ has to each cluster, and let the Bernoulli$(p)$ random variables $X_1,\ldots,X_s$ denote the independent coin tosses associated with each cluster. Then:
\begin{align*}
\Pr[ Z \in \sigma^1_i ]
=& \Pr \left [ X_s = 1 \right ] \cdot \Pr \left [ \textstyle{\sum_{j=1}^{s-1}} w_{ij} X_j \ge k - w_{is} \right ], \\
\Pr[ Z \in \sigma^0_i ] 
=& \Pr \left [ X_s = 0 \right ] \cdot \Pr \left [ \textstyle{\sum_{j=1}^{s-1}} w_{ij} X_j \le d_i - k \right ].
\end{align*}
Here the random quantity $\sum_j w_{ij} X_j$ obeys a weighted equivalent of a Poisson-binomial distribution, and the probabilities in question can be computed explicitly using a dynamic program defined by the following recursion
\begin{eqnarray*}
\Pr \left [ \textstyle{\sum_{j=1}^{s}} w_j X_j \ge T \right ] = 
& p \Pr \left [ \textstyle{\sum_{j=1}^{s-1}} w_{ij} X_j \ge T - w_{is} \right ] + \\ 
& (1-p) \Pr \left [ \textstyle{\sum_{j=1}^{s-1}} w_{ij} X_j \ge T \right ].
\end{eqnarray*}
Note that $T$ is bounded by the maximum vertex degree $d_{max}$, 
making this a polynomial time dynamic program with runtime $O(d_{max}s)$. We formalize this computation into the following proposition.
\begin{proposition}
The probability that vertex $i$ is treated and $\ge k$ neighboring vertices are treated under independent cluster randomization is given by 
$$\Pr[Z \in \sigma_i^1] = pf(s{-}1,k{-}w_{is};p,\vec w)$$
 where 
\begin{eqnarray*}
f(1,T;p,\vec w_i) &=& p \mathbf{1}[T< w_{i1}], \\
f(j, T; p,\vec w_i) &=& p f(j-1,T-w_{ij}; p, \vec w_i  ) \\
&& + (1-p) f(j-1,T;p, \vec w_i).
\end{eqnarray*}
The probability that vertex $i$ is in control and $\ge k$ neighboring vertices are in control under independent cluster randomization is given by 
$$\Pr[Z \in \sigma_i^0] = (1-p)[1-f(s - 1,d_i-k+1;p,\vec w)].$$
\end{proposition}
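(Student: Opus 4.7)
The plan is to decompose each exposure event into two independent parts --- whether $i$'s own cluster is assigned to treatment, and a tail event on a weighted sum of independent Bernoullis indexed by the other connected clusters --- and then show the stated dynamic program correctly computes that tail probability. The key leverage is that, under graph cluster randomization, the assignment of $i$ itself and the counts of $i$'s treated/control neighbors decouple once we condition on the single coin flip $X_s$ for $i$'s own cluster.

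First I would handle $\Pr[Z \in \sigma_i^1]$. Vertex $i$ is $k$-neighborhood exposed to treatment iff $X_s = 1$ (so $i$ itself is treated) and the total count of treated neighbors $\sum_{j=1}^{s} w_{ij} X_j$ is at least $k$. Conditioning on $X_s = 1$ makes the $w_{is}$ neighbors of $i$ lying in its own cluster automatically treated, so the remaining condition reduces to $\sum_{j=1}^{s-1} w_{ij} X_j \geq k - w_{is}$. Because the cluster coin flips $X_1,\ldots,X_s$ are mutually independent, the joint probability factors as $p \cdot \Pr\!\bigl[\sum_{j=1}^{s-1} w_{ij} X_j \geq k - w_{is}\bigr]$, matching the claimed product form once the second factor is identified with $f(s-1, k-w_{is};p,\vec{w})$. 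I would then run the analogous argument for $\Pr[Z \in \sigma_i^0]$, conditioning on $X_s = 0$: now all $w_{is}$ in-cluster neighbors are in control, so the number of control neighbors is $d_i - \sum_{j=1}^{s-1} w_{ij} X_j$, and demanding this to be at least $k$ is equivalent to demanding $\sum_{j=1}^{s-1} w_{ij} X_j \leq d_i - k$, i.e., the complement of $\sum \geq d_i - k + 1$. This is exactly the $1 - f(s-1, d_i - k + 1;p,\vec{w})$ term in the proposition.

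It remains to show that $f(j,T;p,\vec{w}_i)$ as recursively defined equals $\Pr\!\bigl[\sum_{l=1}^{j} w_{il} X_l \geq T\bigr]$. I would prove this by induction on $j$: conditioning on $X_j$ yields either $X_j=1$ with probability $p$, which reduces the threshold on $\sum_{l=1}^{j-1} w_{il} X_l$ from $T$ to $T - w_{ij}$, or $X_j=0$ with probability $1-p$, which leaves it unchanged. Independence of the cluster coin flips makes the conditional tail probability equal to $f(j-1,\cdot)$ in each case, yielding the two-term recursion verbatim; the base case at $j=1$ is a direct evaluation of $\Pr[w_{i1}X_1 \geq T]$. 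The hard part is essentially bookkeeping rather than anything analytic: carefully distinguishing $i$'s own cluster (index $s$) from its other connected clusters, and converting the absolute count condition on control neighbors into a complementary upper-tail event on the same weighted Bernoulli sum. Once the event is factored cleanly, the DP is the standard recursion for a weighted Poisson-binomial tail, and its $O(d_{\max} s)$ runtime follows immediately from the fact that the threshold argument is bounded by $d_{\max}$.
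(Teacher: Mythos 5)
Your proposal is correct and follows essentially the same route as the paper, which derives the proposition from exactly this factorization of the exposure event into the independent coin flip $X_s$ for $i$'s own cluster times a tail probability of the weighted Poisson-binomial sum $\sum_{j=1}^{s-1} w_{ij}X_j$, computed by the same conditioning-on-the-last-coin recursion. The only point worth a second look is the base case: $\Pr[w_{i1}X_1 \ge T]$ equals $1$ (not $p$) when $T \le 0$ and equals $p$ at $T = w_{i1}$ exactly, so the paper's formula $p\,\mathbf{1}[T < w_{i1}]$ and your "direct evaluation" both need the convention that these boundary thresholds are handled consistently, but this does not affect the structure of the argument.
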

Recall that these partial neighborhood exposure conditions (absolute and fractional) are nested. In fact, for a given vertex $i$ the recursion can be used to derive the probability for every possible threshold value under consideration in a single $O(d_{max}s)$ double for-loop. Such a computation in fact returns the probability distribution over the exposure space for each individual. 
See Figure \ref{f:personal} for illustrations of what this distribution can look like.

% \omt{
\begin{figure}[t]
\begin{center}
\includegraphics[width =.85\columnwidth]{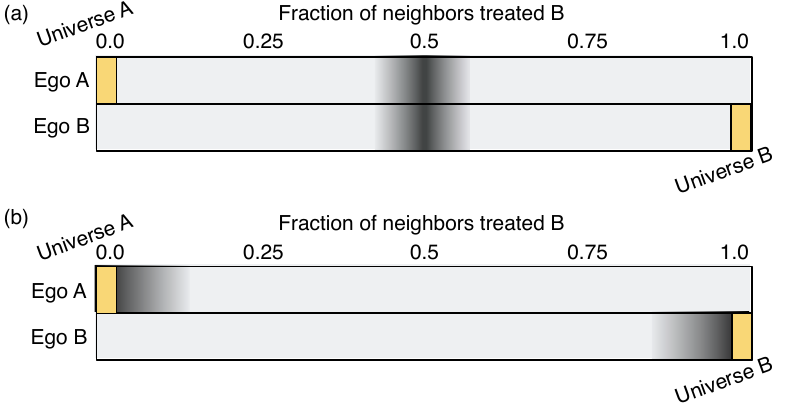} 
\vspace*{-0.20in}
\caption{
The probability distribution over the exposure space for a single individual, where the exposure conditions $\sigma_i^0$ and $\sigma_i^1$ are shown in yellow for both (a) an i.i.d. vertex randomization and (b) an ideal cluster randomization, where the probability mass is collected at exposure conditions of interest.
\vspace*{-0.25in}
}
\label{f:personal}
\end{center}
\end{figure}
% }

The dynamic program above only provides
a means of exactly computing exposure probabilities for absolute and fractional 
neighborhood exposure conditions. 
Unfortunately, how to efficiently compute the exact probability of $k$-core and fractional $q$-core
exposure conditions is unclear, but recall that these exposure conditions were
formally nested subsets of the corresponding neighborhood exposure conditions. 
This at least allows us to upper bound the core exposure probabilities, and we formalize this connection via the following proposition. Because we are generally concerned about exposure probabilities being too small, this upper bound can be useful in identifying vertices with problematically small probabilities already under neighborhood exposure.

\begin{proposition} The probability vertex $i$ is network exposed to a 
treatment condition under core exposure is less than or equal to the 
probability under the analogous neighborhood exposure:  
\begin{eqnarray*}
\Pr( Z \in \sigma_i^x | k\text{-core}) &\le& Pr( Z \in \sigma_i^x | k\text{-nhood}), \\
\Pr( Z \in \sigma_i^x | \text{ frac } q\text{-core}) &\le& Pr( Z \in \sigma_i^x | \text{ frac } q\text{-nhood}).
\end{eqnarray*}
\end{proposition}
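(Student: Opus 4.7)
The plan is to prove the proposition by establishing set inclusion between the relevant exposure sets and then invoking monotonicity of probability. Concretely, I will show that for each flavor of exposure (absolute and fractional), the set of assignment vectors that satisfies the core condition is contained in the set that satisfies the analogous neighborhood condition, i.e.\ $\sigma_i^x(\text{core}) \subseteq \sigma_i^x(\text{nhood})$. Once this containment is verified, the probability bound is immediate: $\Pr(Z \in A) \le \Pr(Z \in B)$ whenever $A \subseteq B$, regardless of the underlying randomization scheme (independent vertex, cluster, or otherwise).

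For the absolute $k$-core case, I would fix an arbitrary $\vec z$ for which vertex $i$ is $k$-core exposed to some treatment condition $x \in \{0,1\}$. Let $V' = \{v : z_v = x\}$ be the set of vertices receiving that condition. By definition, $i$ lies in the $k$-core of $G[V']$, which is the maximal subgraph of $G[V']$ in which every vertex has degree at least $k$. In particular $i \in V'$, so $i$ itself is assigned condition $x$, and the degree of $i$ inside the $k$-core of $G[V']$ is at least $k$. Since every edge incident to $i$ in this subgraph corresponds to a neighbor of $i$ in $G$ that is also in $V'$, vertex $i$ has at least $k$ neighbors assigned to condition $x$. This is precisely the definition of absolute $k$-neighborhood exposure, so $\vec z \in \sigma_i^x(k\text{-nhood})$.

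For the fractional case the argument is identical, with the degree threshold $k$ replaced by $q\,\deg_G(i) = qd_i$: membership in the fractional $q$-core of $G[V']$ forces $i \in V'$ and $\deg_{G[V']}(i) \ge qd_i$, which says that $i$ and at least $qd_i$ of its neighbors all receive condition $x$, matching the fractional $q$-neighborhood exposure condition.

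The step that requires the most care (although it is still elementary) is simply unpacking the definition of the $k$-core and heterogeneous $\mathbf{k}$-core correctly: one must check that the degree bound enjoyed by $i$ inside the core subgraph translates into a lower bound on the number of $V'$-neighbors of $i$ in the original graph $G$. Since edges in $G[V']$ are exactly those $G$-edges with both endpoints in $V'$, and the core subgraph is contained in $G[V']$, this translation is direct. No additional obstacle arises, and the proposition follows. Indeed, this formalizes the informal observation already noted in the text preceding the proposition that ``every assignment vector in which a vertex $i$ would be component or core exposed corresponds to neighborhood exposure, but not vice versa.''
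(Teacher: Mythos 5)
Your proof is correct and follows essentially the same route the paper takes: the paper justifies this proposition only by the informal remark (in Section 2 and just before the proposition) that the core exposure sets are nested inside the corresponding neighborhood exposure sets, and your argument simply makes that containment $\sigma_i^x(\text{core}) \subseteq \sigma_i^x(\text{nhood})$ explicit and applies monotonicity of probability. The unpacking of the $k$-core and fractional $q$-core definitions is accurate, so nothing is missing.
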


It is possible that a useful direct estimate of the core exposure probabilities can be obtained 
via Monte Carlo sampling of the randomization, but we do not explore that possibility here.

\subsection{Estimator variance}

The variance of the Horvitz-Thompson estimator under interference has been studied 
by Aronow and Samii \cite{AS2012}, where they also present 
several variance reduction schemes. 
Estimating the variance under their approach requires 
knowledge of joint exposure 
conditions, the joint probability that vertex $i$ is network exposed to treatment/control 
and vertex $j$ is network exposed to treatment/control. This is the probability that the 
random vector $Z$ is in the exposure condition for vertex $i$ and for vertex $j$ 
simultaneously, i.e. $\Pr(Z \in (\sigma^1_i \cap \sigma^1_j))$ for joint network exposure to treatment.  
If one is interested in computing the variance of the estimator analytically then there is 
nothing fundamentally different about this probability computation when compared 
to the single vertex exposure probability, aside from
the fact that the intersection of 
the two sets can be empty. Aronow and Samii observe that an empty intersection 
makes it impossible to derive an unbiased estimate of the variance (though they
show how the variance can still be upper bounded), but it does 
not bias the effect estimator itself.

The variance of the effect estimator where 
$$\hat{Y}^x(Z)=\frac{1}{n} \sum_i \left[Y_i(Z)\mathbf{1}[Z \in \sigma^{x}_i]/\Pr(Z \in \sigma^{x}_i)\right]$$
 is given by
\begin{eqnarray}
\label{htvar}
\nonumber 
\Var[ \hat{\tau} (Z)]&=  \left [ 
\Var[\hat{Y}^1(Z)] +
\Var[\hat{Y}^0(Z)] \right. - 
\\ & \left. 2 \Cov[\hat{Y}^1(Z), \hat{Y}^0(Z)]
\right ].
\end{eqnarray}
Assuming the exposure conditions are properly specified, namely assuming that $Y_i(\vec z)$ is constant for all $\vec z \in \sigma^x_i$, we can introduce the notation $Y_i(\sigma^x_i) := Y_i(\vec z \in \sigma^x_i)$. Using the further notation $\pi_i^x := Pr[Z \in \sigma^x_i]$ and $\pi_{ij}^{xy} := Pr[Z \in (^x_i \cup \sigma^y_j)]$ we obtain
\begin{eqnarray} \nonumber
\Var [ \hat{Y}^x(Z)]
&=& \frac{1}{n^2} 
 \Bigg [
\sum_{i=1}^{n} \frac{1-\pi_i^x}{\pi_i^x} Y_i(\sigma^x_i)^2 + \\
&& \sum_{i=1}^{n} \sum_{\substack{j=1 \\ j\ne i}}^n \frac{ \pi_{ij}^{xx} - \pi_i^x \pi_j^x}{\pi_i^x \pi_j^x} Y_i(\sigma^x_i) Y_j(\sigma^x_j) \Bigg ]
,
\end{eqnarray}
and
\begin{eqnarray} 
\nonumber 
\Cov[\hat{Y}^1(Z), \hat{Y}^0(Z)]  =
\frac{1}{n^2} \Bigg[ \sum_{i=1}^{n} \sum_{\substack{j=1 \\ j\ne i}}^n  \frac{ \pi_{ij}^{10} - \pi_i^1 \pi_j^0}{\pi_i^1 \pi_j^0} Y_i(\sigma^1_i) Y_j(\sigma^0_j) -  &\\
 \sum_{i=1}^n Y_i(\sigma^1_i)Y_i(^0_i) \Bigg ]. \hspace{1cm}&
\label{eq:covar}
\end{eqnarray}

The above expressions make it evident that the variance is very tightly controlled 
by the probabilities of exposure, and in order to upper bound the variance we will require
lower bounds on the probabilities $\pi_i^x$ and also upper bounds on the
joint probabilities $\pi_{ij}^{xy}$, for all vertex pairs and all combinations of $x$ and $y$. 
For neighborhood exposure, we can now write basic sufficient conditions
under which the variance of the estimator is asymptotically $O(1/n)$ in $n$ for
graph cluster randomization.

\begin{proposition}
Assume the potential outcomes $Y_i(\cdot)$ are all $O(1)$ in $n$. 
If $G$ has maximum degree $O(1)$ and the size of each 
cluster is $O(1)$, then the variance of the Horvitz-Thompson estimator 
for full, $k$-neighborhood, and $q$-fractional neighborhood exposure under 
graph cluster randomization is $O(1/n)$.
\end{proposition}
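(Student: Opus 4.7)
The plan is to bound each of the three terms in the variance decomposition (\ref{htvar}) separately and show each is $O(1/n)$. Let $\Delta = O(1)$ denote the maximum degree. A preliminary observation is that $|S_i| \le d_i + 1 \le \Delta + 1$ for every vertex $i$, so each vertex is connected to only $O(1)$ clusters; this step uses only the degree bound and not yet the cluster-size bound.

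The argument then rests on two structural facts. First, the exposure probabilities $\pi_i^x$ admit a uniform positive lower bound under each of the three exposure conditions: a \emph{sufficient} event for full, absolute $k$-neighborhood, and fractional $q$-neighborhood exposure alike is that every cluster in $S_i$ receives condition $x$, since that event forces $i$ and all its neighbors into condition $x$. Hence $\pi_i^1 \ge p^{|S_i|} \ge p^{\Delta+1}$ and analogously $\pi_i^0 \ge (1-p)^{\Delta+1}$, both positive constants independent of $n$. Second, whether vertex $i$ satisfies its exposure condition depends only on the cluster coin tosses indexed by $S_i$, so if $S_i \cap S_j = \emptyset$ then $\pi_{ij}^{xy} = \pi_i^x \pi_j^y$ and the corresponding summand in the cross-term sums vanishes. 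The number of vertices $j$ with $S_i \cap S_j \ne \emptyset$ is at most $|S_i|$ times the maximum cluster size times $\Delta + 1$ (since $j$ must have itself or a neighbor lying in one of $i$'s connected clusters), and this count is $O(1)$ under our hypotheses.

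Combining these facts with $Y_i(\cdot) = O(1)$: the diagonal term $\sum_i (1-\pi_i^x) Y_i(\sigma_i^x)^2/\pi_i^x$ is a sum of $n$ bounded summands and hence $O(n)$; the cross-term sums in $\Var[\hat{Y}^x]$ and $\Cov[\hat{Y}^1, \hat{Y}^0]$ contain at most $O(n)$ nonzero summands by the disjointness observation, each of magnitude $O(1)$ (using the uniform lower bounds on $\pi_i^x$ together with the trivial bound $\pi_{ij}^{xy} \le 1$), so those sums are also $O(n)$; and the diagonal correction $\sum_i Y_i(\sigma^1_i) Y_i(\sigma^0_i)$ in (\ref{eq:covar}) is trivially $O(n)$. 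After the $1/n^2$ prefactor on each variance and covariance expression, every contribution is $O(1/n)$, and combining via (\ref{htvar}) yields $\Var[\hat{\tau}] = O(1/n)$. The only genuine subtlety, and the step I would be most careful with, is the uniform lower bound on $\pi_i^x$ for the partial exposure conditions: rather than attempting to compute the exact probability via the dynamic program of the preceding proposition, one observes that the ``all of $S_i$ receives condition $x$'' event is sufficient for the partial conditions as well, so the bound $p^{\Delta+1}$ holds a fortiori; everything else is mechanical bookkeeping.
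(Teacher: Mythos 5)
Your proposal is correct and follows essentially the same route as the paper's proof: lower-bound each $\pi_i^x$ by a constant via $|S_i| \le \Delta+1$, observe that the cross-terms vanish unless $S_i \cap S_j \ne \emptyset$, and count the $O(1)$ vertices $j$ per vertex $i$ for which this can happen using the bounded cluster size and bounded degree. Your explicit justification that the ``all clusters in $S_i$ treated'' event is a sufficient event for the partial exposure conditions (giving the uniform lower bound $p^{\Delta+1}$ a fortiori) is a point the paper glosses over, and is a welcome clarification rather than a deviation.
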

\begin{proof}
Assume $G$ has maximum degree $O(1)$ and the size of each cluster is $O(1)$.
All of the single sums are clearly $O(n)$: $\pi_i^{x}$ is O$(1)$ since all vertices have bounded degree.
For the double sums, note that $\pi_{ij}^{xx} = \pi_i^{x}\pi_j^{x}$ if and only if $i$ and $j$
have no common cluster neighbors, $|S_i \cap S_j| = 0$.  
Whenever $|S_i \cap S_j| > 0$, $\pi_{ij}^{xx} >\pi_i^{x}\pi_j^{x}$ for full, $k$-neighborhood, 
and $q$-fractional neighborhood exposure. Further, 
$\pi_{ij}^{10} < \pi_i^{1} \pi_j^{0}$ if $|S_i \cap S_j| > 0$ and 
$\pi_{ij}^{10} = \pi_i^{1} \pi_j^{0}$ otherwise.  

So the terms of the double sums are zero whenever $\pi_{ij} = \pi_i \pi_j$
and when the terms are not zero ($|S_i \cap S_j| > 0$), 
they are all positive and bounded above $O(1)$ due to the bounded degrees.  We now bound
the number of vertices $j$ for which $|S_i \cap S_j| > 0$.  Vertex $i$ at most connects to $O(1)$ clusters and therefore $|S_i| = O(1)$.  For all $C \in S_i$, we have that $|S_i \cap S_j| > 0$ for any $j \in C$ and for any vertex $j$ that is adjacent to a vertex in cluster $C$.  Both of these contributions is $O(1)$, giving an $O(1)$ contribution of vertices for each $C \in S_i$.  Since there are $O(1)$ such clusters, this is still $O(1)$ vertices $j$ in total for vertex $i$ such that $|S_i \cap S_j| > 0$.  Thus for each vertex, at most $O(1)$ of the terms in the double sum are positive, making the total variance $O(1/n)$.
\end{proof}

The strength of this general result is that it achieves an 
$O(1/n)$ bound on the variance when the maximum degree is bounded.
The problem is that the variance can grow exponentially 
in the degrees of the graph.
In this next section we address this issue,
introducing a condition on a graph that ensures we can 
find a clustering into sets of size $O(1)$ ---
consistent with the above result -- for which the variance 
grows as $O(1/n)$ but is also linear rather than exponential
in the maximum degree.

\begin{figure*}[t]
\begin{center}
\includegraphics[width =.75\textwidth]{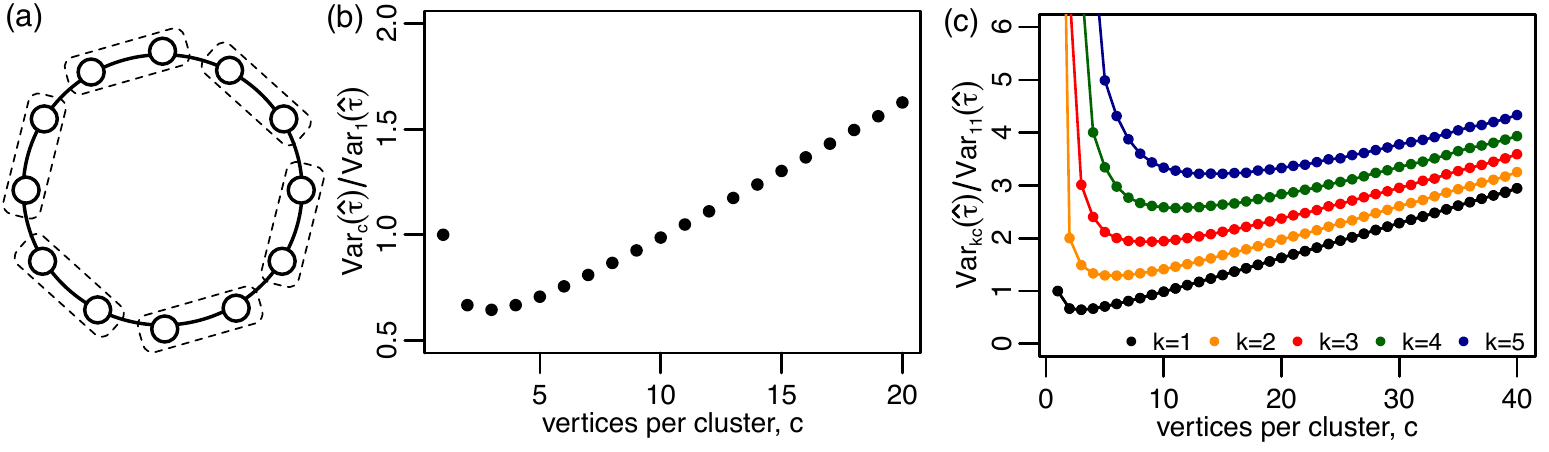} 
\vspace*{-0.15in}
\caption{
The cycle graph, (a) where vertices respond $\bar Y$ to treatment and $0$ to control, shown clustered in groups of $c=2$ vertices. 
(b) Asymptotic variance of the estimator for this graph as a function of the number of vertices per cluster, normalized by estimator variance for $c=1$ vertices per cluster. (c) Simulated variance of the estimator for $k$th powers of the cycle graph for $k=1,\ldots,5$ as a function of the number of vertices per cluster. For each $k$ the variance for cluster size $c=2k+1$ grows linearly in $k$.
\vspace*{-0.30in}
}
\label{f:toymodel}
\end{center}
\end{figure*}

\section{Variance on \\ Restricted-growth graphs}
\label{s:growth}

In order to measure average treatment effects under interference 
on large-scale graphs, it is necessary to design
a randomization scheme capable of containing the estimator 
variance for high-degree vertices. 
In this section we show that any graph satisfying our 
restricted-growth condition admits a
clustering that can produce an unbiased effect estimate that is both $O(1/n)$
and linear in the degrees of the graph. 
In contrast, we show that
with less careful clustering,
it is easy for 
the variance to grow exponentially in the degrees.

Let us first define restricted-growth graphs. Let $B_{r}(v)$ be the set of vertices within $r$ hops of a vertex $v$. 

\begin{definition}
A graph $G=(V,E)$ is a restricted-growth graph if for all vertices $v \in V$ and all $r > 0$,
we have $|B_{r+1}(v)| \leq \kappa |B_r(v)|$. 
\end{definition}

As mentioned in the introduction, 
graphs derived from a
uniform-density embedding in a Euclidean space of dimension $m$ exhibit restricted growth,
with growth constant $\kappa = 2^m$ independent
of degree.
To develop intuition for the restricted-growth assumption, 
we first analyze the
variance using graph cluster randomization 
on a family of particularly 
tractable restricted-growth graphs, $k$th powers of the cycle. 
We follow this analysis by
proving bounds on the variance for general restricted-growth graphs.

\subsection{Cycle and powers of the cycle examples}

First we will consider a simple  graph consisting of a single cycle with $n$ vertices. 
For this graph, we consider the full neighborhood exposure model, where we are interested 
in the average treatment effect between $\sigma^1_i$, 
when a vertex is treated and both of their neighbors are treated, 
and $\sigma^0_i$, when a vertex is not treated and neither of their neighbors are treated.
For the fixed responses of the vertices to treatment and control, we assume that
all vertices uniformly respond $Y_i(\sigma^1_i) = \bar Y$ to network exposure to the treatment
and $Y_i(\sigma^0_i)=0$ to network exposure to the control. 
The cycle graph clearly admits an intuitively obvious
clustering using the cycle structure, 
with contiguous blocks
of $c$ vertices randomized together. 
As a last assumption, assume that clusters are selected
under a balanced randomization with $p=1/2$. 
Our goal is to determine how the variance of the 
Horvitz-Thompson average treatment effect estimator depends on the size $c$ of these clusters.
For this basic combination of graph, exposure condition, responses, and clustering,
one can derive the asymptotic variance exactly.

Consider the variance presented in \eqref{htvar} above.
Since all vertices respond zero to the control condition in our example, 
as long as the exposure probability for the control condition is strictly positive then 
both $\Var(\hat{Y}(\sigma_0))$ and $\Cov(\hat{Y}(\sigma_1), \hat{Y}(\sigma_0))$ are zero. 
Since our calculations will rely only on probabilities $\pi^1_i$ for the
exposure to treatment condition, we omit the superscript. The variance is then:
\begin{eqnarray} %\nonumber 
\label{examplevar}
\Var [ \hat{\tau} (Z)]
= \frac{\bar Y^2}{n^2} 
 \Bigg [
\sum_{i=1}^{n}\left  (\frac{1}{\pi_i}-1 \right) + 
\sum_{i=1}^{n} \sum_{\substack{j=1 \\ j\ne i}}^n \left ( \frac{ \pi_{ij}}{\pi_i \pi_j} -1 \right )\Bigg ] .
\end{eqnarray}
Notice that the terms of the double sum are only non-zero for vertex pairs where $\pi_{ij} \ne \pi_i \pi_j$.

First, consider the case of each vertex being its own cluster. The probability of being exposed and both of 
one's neighbors being exposed is equal to the probability of seeing three independent 
coins come up heads. When the randomization is balanced (e.g. $p=1/2$), we 
obtain $\pi_{i}=1/8, \forall i$.
Note that the co-assignment probabilities depend on whether vertices $i$ and $j$ 
are neighbors or share a neighbor. From this we derive $\pi_{ij} = 1/16$ if $|i-j|=1$ 
and $\pi_{ij} = 1/32$ if $|i-j|=2$, and if $|i-j|>2$, the probabilities are independent.
We obtain $\Var( \hat{\tau} (Z)) =(15/2)\bar Y^2\frac{1}{n} + O(1/n^2)$.

%{\bf Randomizing pairs of vertices.}
%%\vspace{-0.1in}
%Consider now a slightly more clustered randomization, 
%where the clusters consist of alternating pairs of adjacent vertices. 
%There are two ways to pair the vertices of the cycle, shifted from each other by 1 vertex, 
%but these two clusterings are equal in their variance. 
%
%Under paired randomization, $\pi_i(d_1) = \pi_i(d_0) = \pi = 1/4, \forall i$.
%The non-independent co-assignment probabilities again depend on adjacency, 
%but now we must also consider the different possible ways in which pairs can be adjacent.
%$$
%\begin{array}{llrllr}
%|i-j|=1:& 
%\pi_{ij} = 1/4 &(n-1)/2 \text{ times},  \\
%&
% \pi_{ij} = 1/8 &(n-1)/2 \text{ times}.
% \\
%|i-j|=2:& 
%\pi_{ij} = 1/8 &(n-2) \text{ times}. &&
%\\
%|i-j|=3:& 
%\pi_{ij} = 1/8 &(n-3)/2 \text{ times}, \\
%&
%\pi_{ij} = 1/16 &(n-3)/2 \text{ times}.
%\end{array}
%$$
%Half of the distance-3 pairs have independent co-assignment probability, $\pi_{ij}=1/16$. 
%The variance of the estimator evaluates to $\Var ( \hat \tau (Z) ) = (10/2)\bar Y^2 \frac{1}{n} + O(1/n^2)$.
%%\begin{eqnarray*}
%%\Var( \hat{\tau}_{HT} (d_1,d_0))
%%&=& \frac{a^2+b^2}{4n^2}
%%\Bigg [
%%3n
%%+(n-1) \left( \frac{16}{4}- 1 \right ) 
%%+(n-1) \left( \frac{16}{8}- 1 \right ) 
%%\\ &&
%%+2(n-2) \left( \frac{16}{8}- 1 \right ) 
%%+(n-3) \left( \frac{16}{8}- 1 \right ) 
%%\Bigg ] + O(1/n^2) \\
%%&=& 10\frac{(a^2+b^2)}{4} \frac{1}{n} + O(1/n^2).
%%\end{eqnarray*}
%
%{\bf Randomizing $c\ge2$ vertices.}
%\vspace{-0.1in}

Now, consider randomizing blocks of $c\ge 2$ vertices, where $c$ does not depend on $n$.
The calculations for this case are expansive but straight-forward.
We consider a single one of the equivalent cyclically shifted possibilities. 
The calculation requires handling $c=2$ and $c \ge 3$ separately, but 
the expression for $c \ge 3$ as a function of $c$ holds for $c=2$ as well, so
we omit the special case for brevity.
The variance calculation depends on distance $\Delta = |i-j|$ up to $\Delta =c+1$, and 
for $c\ge3$ this evaluates to:
\begin{align*}
\Var[ \hat{\tau} (Z)] 
= \frac{\bar Y^2}{n^2}
\Bigg [
\left( n +\frac{4n}{c} \right)
+ 
\underbrace{
\frac{2n}{c}
(c+2)
}_{\Delta=1}
+ \hspace{2cm}
\\ 
\underbrace{
\frac{2n}{c} \sum_{k=2}^{c-2} (c - k + 2 )
}_{1<\Delta<c-1}
+ 
\underbrace{
\frac{2n}{c} 3 
}_{\Delta=c-1}
+ 
%\\
%&&
\underbrace{
\frac{2n}{c} 2
}_{\Delta=c}
+ 
\underbrace{
\frac{2n}{c}
}_{\Delta=c+1}
\Bigg ] + O\left (\frac{1}{n^2} \right ).
\end{align*}
This reduces to 
$\Var( \hat{\tau} (Z)) =
\left (\frac{c}{2} + 2 + \frac{4}{c} \right ) \bar Y^2 \frac{1}{n} + O(1/n^2)$,
which holds for all $c \ge 2$.

Combining these calculations, the asymptotic variance of the estimator for all $c$ 
is plotted in Figure 2. Notice that the variance is minimized when randomizing 
clusters of size $c=3$, which corresponds exactly to the size of neighborhoods
on the simple cycle. 

To build upon this observation, we now examine the simulated variance for higher degree
extensions of the cycle, the so-called $k$th power of the cycle, where analytic derivation is already unwieldy.
Thus, we use a simulation of 
the cluster randomization procedure to examine how the variance of the 
effect size estimator depends on the cluster size for these higher degree graphs.

The $k$th power of a cycle graph consists of a cycle where each vertex is connected
to the $k$ nearest neighbors on each side, yielding a regular graph where all vertices
have degree $d=2k$. By sampling one million cluster randomizations on graphs
with $n=5000$ vertices, we can compute the sample variance of the estimator across these samples.
The results are shown in Figure 2, for $k=1$ through $k=5$. The simulations for $k=1$ agree
precisely with the overlaid asymptotic calculations.

Notice how the optimal cluster size $c$ appears to scale approximately
linearly in degree, and also notice how the variance at the optimal clustering size,
the minimum value of each curve as $k$ increases, appears to scale linearly in $k$.
While the exact variance as a function of cluster size $c$ is unwieldy to derive, 
we are able to provide the following upper bound, showing how the variance 
of the estimator for clusters of size $c= d+1$
scales linearly in the degree $d$ of the graph.
This suggests that one should treat contiguous blocks 
of the cycle attuned to the size of the neighborhood of the vertices.

When deriving this upper bound, it is no longer necessary to assume a uniform response
$Y_i(\sigma^1_i) = \bar Y$, and instead we simply assume that the responses are upper bounded
by some value $Y_i(\sigma^1_i) \le Y_M$. 

When clusters have size $c=d+1$, each vertex can be connected to at most $2$ clusters,
meaning that $ 1/\pi_i \le 1/p^2$ for all $i$. So
\begin{eqnarray*} %\nonumber 
\label{powervar1}
\Var [ \hat{\tau} (Z)]
&\le& 
\frac{Y_M^2}{n^2}
 \Bigg [
\sum_{i=1}^{n} (p^{-2}-1) + 
\sum_{i=1}^{n} \sum_{\substack{j=1 \\ j\ne i}}^n (\frac{ \pi_{ij}}{\pi_i \pi_j} - 1) \Bigg ].
\end{eqnarray*}
Now each vertex
has a non-independent joint assignment probability 
(such that $\pi_{ij} \neq \pi_{i} \pi_{j}$) with at most $3d+1$ other 
vertices: 
up to $2d+1$ other vertices when they are adjacent to two clusters, 
the $d/2$ to the left of the left cluster, and 
the $d/2$ to the right of the right cluster.
The joint assignment probability $\pi_{ij}$ is at most $p^2$, since
two vertices can not both be at the center of a cluster.
For each $i$, the sum indexed by $j$ then can be bounded, producing
\begin{eqnarray*} %\nonumber 
\label{powervar2}
\Var [ \hat{\tau} (Z)]
\le
Y_M^2 (p^{-2} - 1)(3d+2)\frac{1}{n} .
\end{eqnarray*}

This result tells us that it is possible to experimentally measure network effects 
on a cycle graph of very high degree $d$ with a variance that is only
linear in $d$, provided that the vertices are clustered in contiguous 
blocks of $d+1$ vertices.
We now show how this strategy of bounding the variance 
applies to a much more general class of graphs, using a clustering
algorithm that does not require knowledge of any geometric structure.

\subsection{Clustering restricted-growth graphs}

We now begin developing the main result of this section, a cluster
randomization scheme for the class of restricted-growth graphs.
The first component is a clustering algorithm for such
graphs in which each vertex 
is connected to at most a constant number of clusters, 
independent of the degree of the vertex. This 
will then imply that the variance 
on any restricted-growth 
graph can be upper bounded by a function linear in the degree.
Our clustering shows that the nice decomposition
of the cycle by contiguous regions can be generalized to arbitrary
graphs in our class. 
In other words, 
the geometry isn't crucial; the restricted-growth property is enough.

Consider a restricted-growth graph $G = (V,E)$;
we will present the case in which 
$G$ is $d$-regular, but as we note below, the 
regularity can be relaxed to
arbitrary degree distributions at the cost of a weaker
but still constant bound on the number of connected clusters.

Recall that 
the restricted-growth condition says there exists 
$\kappa$ so that for all $v$
and all $r > 0$, we have $|B_{r+1}(v)| \leq \kappa |B_r(v)|$.
Importantly, $r = 0$ is different:
$B_0(v)$ is the singleton set $\{v\}$, while $B_1(v)$ is
the neighborhood of $v$ and hence has size $d+1$.
Thus $|B_1(v)|/|B_0(v)| = d+1$, potentially much larger than 
the bound of $\kappa$ on the ratio $|B_{r+1}(v)|/|B_r(v)|$ for $r > 0$.
This is the crux of the restricted-growth condition:
from radius 0 to 1 we have 
unrestricted growth (a factor of $d+1$), but then the growth slows
to factors of $\kappa$ which can be bounded separately from $d$.

In the language of metric spaces, we will cluster the graph using a 
{\it 3-net} for the shortest-path metric of $G$ \cite{KL2002}. 
Formally, in a metric space $X$, an $r$-net $Y \subseteq X$ is a collection of
points that are mutually at distance at least $r$ from each other, 
but the union of all their $r$-balls covers the space, $X \subseteq \cup_{y \in Y} B_r(y)$.
Accordingly, we call our construction a 
{\it 3-net clustering} of the graph. 
To build a 3-net clustering, we will iteratively identify vertices $v_1,
v_2, ...$, `marking' vertices as we do this.
Afterwards we will identify clusters $C_1, C_2, ...$ to go with these vertices.
More explicitly, we perform the following procedure consisting of two principle stages:

\vspace{-0.1cm}
\begin{itemize}[leftmargin=11pt]
\itemsep2pt 
\parskip1pt 
\parsep2pt
\labelsep4pt

\item Initially all vertices are unmarked. 

\item  While there are unmarked vertices, in step $j$ find an arbitrary
unmarked vertex $v$, selecting $v$ to be vertex $v_j$ and marking all vertices
in $B_2(v_j)$. 

\item Suppose $k$ such vertices are defined, and let $S = \{v_1, v_2, ..., v_k\}$.

\item For every vertex $w$ of $G$, assign $w$ to the closest vertex $v_i \in S$,
breaking ties consistently (e.g. in order of lowest index).

\item For every $v_j$, let $C_j$ be the set of all vertices assigned to $v_j$.

\end{itemize}

The sets $C_1, \ldots, C_k$ are then our 3-net clustering. 
The key property of this clustering is the following result, 
which establishes that each vertex
is connected to a number of clusters that can be bounded by
a function of $\kappa$, independent of the degree.

\begin{proposition}
\label{claim2}
Consider any 3-net clustering of a graph $G=(V,E)$. For all $w \in V$, the neighborhood $B_1(w)$ has a non-empty
intersection with at most $\kappa^3$ distinct clusters.
\end{proposition}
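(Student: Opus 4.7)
The plan is to exploit two key structural properties of the 3-net construction and then apply the restricted-growth condition once to each of the three ``gaps'' between radii $1, 2, 3, 4$. First I would record the two properties: (i) any two distinct net points $v_j, v_{j'} \in S$ are at graph distance at least $3$, since whichever was selected second was unmarked at that moment and therefore not in $B_2$ of the earlier one; and (ii) every vertex $u$ lies in the cluster of a net point $v_j$ at distance at most $2$, since the selection procedure terminates only when no unmarked vertex remains, and each selection of $v_j$ marks all of $B_2(v_j)$, so $u$'s closest net point is within distance $2$.

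Next I would relate the clusters meeting $B_1(w)$ to net points close to $w$. Suppose $B_1(w)$ intersects distinct clusters $C_{j_1}, \dots, C_{j_m}$, and pick a witness $u_\ell \in B_1(w) \cap C_{j_\ell}$ for each $\ell$. By property (ii), $d(u_\ell, v_{j_\ell}) \le 2$, and by construction $d(w, u_\ell) \le 1$, so by the triangle inequality $v_{j_\ell} \in B_3(w)$ for each $\ell$. Hence all $m$ relevant net points live inside $B_3(w)$.

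Now I would pack disjoint neighborhoods around these net points. By property (i), the sets $B_1(v_{j_1}), \dots, B_1(v_{j_m})$ are pairwise disjoint: any common vertex would place two net points within distance $2$. Because each $v_{j_\ell} \in B_3(w)$, we get $B_1(v_{j_\ell}) \subseteq B_4(w)$. In the $d$-regular case $|B_1(v_{j_\ell})| = d+1 = |B_1(w)|$, so disjointness gives
\[
m \cdot |B_1(w)| \;=\; \sum_{\ell=1}^m |B_1(v_{j_\ell})| \;\le\; |B_4(w)|.
\]

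Finally I would apply the restricted-growth condition three times, to the ratios $|B_4|/|B_3|$, $|B_3|/|B_2|$, $|B_2|/|B_1|$, to conclude $|B_4(w)| \le \kappa^3 |B_1(w)|$, whence $m \le \kappa^3$, exactly as claimed. The one delicate point (and arguably the main obstacle) is that the restricted-growth bound is stated only for $r > 0$, so one must start the telescoping at radius $1$ rather than $0$; this is precisely why the 3-net is needed rather than, say, a 2-net, since we need the witnesses $v_{j_\ell}$ to lie at radius $\ge 1$ so that we can pack disjoint $B_1$-balls whose sizes we compare directly to $|B_1(w)|$ without ever invoking the uncontrolled jump from $|B_0|$ to $|B_1|$. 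The same argument handles non-regular graphs with a slightly weaker constant by replacing $|B_1(w)|$ in the packing step with $\min_\ell |B_1(v_{j_\ell})|$.
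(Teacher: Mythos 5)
Your proposal is correct and follows essentially the same route as the paper's proof: establishing that each cluster $C_j$ is contained in $B_2(v_j)$, that the balls $B_1(v_i)$ around distinct net points are pairwise disjoint, locating the relevant net points in $B_3(w)$ via witnesses in $B_1(w)$, and then packing the disjoint $(d+1)$-sized balls into $B_4(w)$ and applying the restricted-growth inequality three times. Your remark about why the telescoping must start at radius $1$ (avoiding the uncontrolled jump from $|B_0|$ to $|B_1|$) is exactly the point the paper emphasizes as the crux of the restricted-growth condition.
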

\begin{proof}
We first claim that for all $v_j \in S$, we have
$C_j \subseteq B_2(v_j)$.
Indeed, consider any vertex $w \neq v_j$ in $C_j$.
We have $w \not\in S$, since otherwise $w$ would belong to the cluster
identified with itself. Now, consider the iteration $i$ in which $w$ was 
marked; we have $w \in B_2(v_i)$.
Since $w \in C_j$ and it is assigned to the closest vertex in $S$,
it follows that $w \in B_2(v_j)$.
Thus $C_j \subseteq B_2(v_j)$.

Next, we claim that for all $v_i, v_j \in S$, 
the sets $B_1(v_i)$ and $B_1(v_j)$ are
disjoint.
Suppose by way of contradiction that 
$B_1(v_i) \cap B_1(v_j) \neq \emptyset$.
It would follow that $v_i \in B_2(v_j)$ and vice versa.
But then if we consider the vertex among $v_i$ and $v_j$ that was added to $S$ first,
the other of $v_i$ or $v_j$ would have been marked in that iteration, 
and hence it could not have been added to $S$ as well.
This contradiction establishes that $B_1(v_i)$ and $B_1(v_j)$ are disjoint.

To complete the proof, suppose by way of contradiction
that $B_1(w)$ has a non-empty intersection 
with more than $\kappa^3$ distinct clusters: 
for some $t > \kappa^3$,
let $u_1, u_2, \ldots, u_t$ be distinct vertices in $B_1(w)$
and $v_{i_1}, \ldots, v_{i_t}$ be distinct vertices in $S$
such that $u_h \in C_{i_h}$ for $h = 1, 2, \ldots, t$.

Since $C_{i_h} \subseteq B_2(v_{i_h})$, and $C_{i_h}$ contains
a vertex adjacent to $w$ (or contains $w$ itself), 
we have $v_{i_h} \in B_3(w)$, and hence
$B_1(v_{i_h}) \subseteq B_4(w)$.
The neighborhoods $B_1(v_{i_1}), B_1(v_{i_2}), \ldots, B_1(v_{i_t})$
are all pairwise disjoint as argued above, and they are all contained in 
$B_4(w)$, which implies that $|B_4(w)| \geq t(d + 1) > \kappa^3(d + 1)$.
But applying the bounded growth inequality $|B_{r+1}(w)| \leq \kappa |B_r(w)|$
three times we have
$|B_4(w)| \leq \kappa^3 (d + 1)$, a contradiction.
This establishes that $B_1(w)$ can have a non-empty intersection 
with at most $\kappa^3$ distinct clusters.
\end{proof}

The above result is formulated for $d$-regular graphs.
But in fact one can show a weaker bound depending only on $\kappa$
as in Proposition \ref{claim2} even for arbitrary restricted-growth graphs,
without any requirement on the degrees.
This weaker bound of $\kappa^6$ can be established by observing
that any restricted-growth graph exhibits a ``bounded gradient''
on the vertex degrees, 
whereby vertices that are near each other in the graph
must have similar degrees.
Combining this fact with 
proof of Proposition \ref{claim2} leads to the desired bound.

\subsection{Variance bounds}
We now apply the above results to bound the variance of the effect estimator $\hat \tau$.
Throughout this section we assume that all responses obey upper bounds and 
positive lower bounds, $Y_i^x \in [Y_m,Y_M]$ for both 
exposure to treatment and control, $x=0,1$. 
The reason for the positive lower bounds is that without them
the users could all be responding zero to all treatments, making the variance zero
regardless of the treatment scheme. We also assume the randomization probability $p$ is 
not degenerate, i.e. $p \in (0,1)$. 
We present the results for $d$-regular graphs to keep expressions
manageable, but analogous results can be derived for arbitrary degrees.

We first establish an exponential lower bound for the variance
under vertex-level randomization, and then we show a contrasting 
linear upper bound for the variance under our 3-net
cluster randomization scheme.

\begin{proposition}
The variance of the HT estimator under full neighborhood exposure 
for vertex randomization
of a graph with $n$ vertices
is lower bounded by an exponential function in the degree $d$ of the graph,
 $\Var [ \hat{\tau} (Z)] \ge O(1/n) (p^{-(d+1)}+(1-p)^{-(d+1)}-1)$.
\end{proposition}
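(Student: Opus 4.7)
The plan is to lower-bound $\Var[\hat\tau]$ by keeping only the diagonal ($i=i$) contributions from each of $\Var[\hat Y^1]$, $\Var[\hat Y^0]$, and $-2\Cov[\hat Y^1,\hat Y^0]$, and showing that every off-diagonal ($i \neq j$) contribution is non-negative under vertex-level randomization. Since vertex randomization for full neighborhood exposure on a $d$-regular graph gives $\pi_i^1 = p^{d+1}$ and $\pi_i^0 = (1-p)^{d+1}$, the diagonal piece already exhibits the desired exponential dependence through the $1/\pi_i^x$ factors, and we only need to verify that the cross terms do not cancel it.

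First, I would collect the diagonal contributions from the three expressions in the excerpt. The $i=i$ part of $\Var[\hat Y^x]$ contributes $\frac{1}{n^2}\sum_i \frac{1-\pi_i^x}{\pi_i^x} Y_i(\sigma_i^x)^2$, while the excerpt's covariance formula supplies the diagonal $-\frac{1}{n^2}\sum_i Y_i(\sigma_i^1)Y_i(\sigma_i^0)$, which enters $\Var[\hat\tau]$ with a factor of $-2$. Combining the three gives
\[
\frac{1}{n^2}\sum_i \left[\frac{Y_i(\sigma_i^1)^2}{\pi_i^1} + \frac{Y_i(\sigma_i^0)^2}{\pi_i^0} - \bigl(Y_i(\sigma_i^1)-Y_i(\sigma_i^0)\bigr)^2\right].
\]
Applying $Y_i \in [Y_m,Y_M]$ with $Y_m>0$ bounds this below by $\frac{1}{n}\bigl[Y_m^2(p^{-(d+1)} + (1-p)^{-(d+1)}) - Y_M^2\bigr]$, which is $\Omega\!\bigl(\frac{1}{n}(p^{-(d+1)} + (1-p)^{-(d+1)} - 1)\bigr)$.

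Next I would verify non-negativity of every off-diagonal term. Let $N[i]$ denote the closed neighborhood of $i$. If $N[i] \cap N[j] = \emptyset$, then the coins governing $i$'s and $j$'s exposures are independent, so $\pi_{ij}^{xy} = \pi_i^x \pi_j^y$ for all $x,y$ and every off-diagonal term vanishes. If $N[i] \cap N[j] \neq \emptyset$, then (i) for the $xx$ contribution, $\pi_{ij}^{xx} = p^{|N[i] \cup N[j]|} > \pi_i^x \pi_j^x$ (the missing $p^{-|N[i]\cap N[j]|}$ factor is $>1$), so the corresponding term in $\Var[\hat Y^x]$ is positive; and (ii) for the $10$ contribution, $\pi_{ij}^{10}=0$ because any vertex in $N[i]\cap N[j]$ cannot simultaneously be treated and untreated, so this off-diagonal contributes $-2 \cdot \frac{-\pi_i^1\pi_j^0}{\pi_i^1\pi_j^0} Y_i(\sigma_i^1)Y_j(\sigma_j^0) = 2Y_i(\sigma_i^1)Y_j(\sigma_j^0) \geq 0$. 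Hence in every case the off-diagonal sum in $\Var[\hat Y^1]+\Var[\hat Y^0]-2\Cov[\hat Y^1,\hat Y^0]$ is a sum of non-negative quantities, and dropping them only weakens the lower bound.

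The main obstacle is the bookkeeping of signs in Step 2: the covariance subtraction is precisely the term that could in principle cancel the dominant $\pi_i^{-(d+1)}$ diagonal growth, so one must both identify the correct diagonal correction coming from $-2\Cov$ (which turns out to be the benign $-(Y_i(\sigma_i^1)-Y_i(\sigma_i^0))^2 \geq -Y_M^2$) and also confirm that the neighborhood-overlap cases of $-2\Cov$ actually push the variance up rather than down. Once both sign checks are in place, the conclusion follows immediately from the $n$-fold sum of $p^{-(d+1)}+(1-p)^{-(d+1)}$ in the diagonal.
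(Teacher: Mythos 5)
Your proof is correct and follows essentially the same route as the paper's: check that every off-diagonal term is non-negative under vertex randomization (using $\pi_{ij}^{xx}\ge\pi_i^x\pi_j^x$ and $\pi_{ij}^{10}\le\pi_i^1\pi_j^0$) and then read off the exponential from the diagonal terms with $\pi_i^1=p^{d+1}$ and $\pi_i^0=(1-p)^{d+1}$. The only cosmetic difference is that you retain the covariance's diagonal and fold it into $-\bigl(Y_i(\sigma_i^1)-Y_i(\sigma_i^0)\bigr)^2\ge-Y_M^2$, whereas the paper drops the entire (non-negative) $-2\Cov$ contribution and lands on a $-2Y_m^2$ correction instead; both yield the claimed $\Omega(1/n)\bigl(p^{-(d+1)}+(1-p)^{-(d+1)}-O(1)\bigr)$ lower bound.
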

\hspace{3mm} {\sc Proof.}
The joint assignment probabilities for two vertices having the same exposure
is at least the product of their individual probabilities, 
 $\pi^{xx}_{ij} \ge \pi^x_i \pi^x_j$ for $x=0,1$. Thus the double sum in equation (4)
 is non-negative. Similarly, for opposing exposure conditions, we have 
 $\pi^{xy}_{ij} \le \pi^x_i \pi^y_j$ for $x \ne y$, which makes equation 
 (5) a non-negative contribution to equation (3). 
 We focus our lower bound on the main term of equation (4). 
Inputting the probabilities $\pi^1_i = p^{d+1}$ and $\pi^0_i = (1-p)^{d+1}$ 
and lower bounding responses gives us
the desired result. 
\begin{eqnarray*}
\Var [ \hat{\tau} (Z) ]
&\ge& 
\frac{1}{n^2} 
 \Bigg [
\sum_{i=1}^{n} (\frac{1}{\pi^1_i}-1)(Y^1_i)^2
+
\sum_{i=1}^{n} (\frac{1}{\pi^0_i}-1)(Y^0_i)^2 
\Bigg ]
\\
&\ge&
\frac{ Y_m^2}{n} 
 (p^{-(d+1)}+(1-p)^{-(d+1)}-2 ). ~~~~~~~ \Box
\end{eqnarray*}

For graphs with arbitrary degree distributions, this bound becomes
 $
\Var [ \hat{\tau} (Z)] \ge O(1/n) \sum_{i=1}^n (p^{-(d_i+1)}+(1-p)^{-(d_i+1)}-2)
 $, which is exponential in the degree of each vertex, meaning
 that even a single high degree vertices can easily
 explode the variance. 
 
 We now turn to our linear upper bound for
 growth-restricted graphs when using our 3-net clustering.

\begin{proposition}
The variance of the HT estimator under full, $q$-fractional, or $k$-absolute
neighborhood exposure for a 3-net cluster randomization 
of a restricted-growth graph 
is upper bounded by a function linear in the degree $d$ of the graph.
\end{proposition}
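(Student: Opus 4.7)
The plan is to combine Proposition \ref{claim2} with the restricted-growth inequality to bound each sum appearing in \eqref{htvar} and the formulas defining $\Var[\hat Y^x(Z)]$ and $\Cov[\hat Y^1,\hat Y^0]$ by a function linear in $d$. Specifically, I will (a) lower-bound every individual exposure probability $\pi_i^x$ by a constant depending only on $p$ and $\kappa$, (b) upper-bound each per-pair ratio $|\pi_{ij}^{xy}-\pi_i^x\pi_j^y|/(\pi_i^x\pi_j^y)$ by a constant, and (c) show that for each vertex $i$ only $O(d)$ other vertices $j$ contribute a nonzero term to the double sums. Together these three facts immediately yield $\Var[\hat\tau(Z)]\le Cd/n$ for a constant $C=C(p,\kappa,Y_M)$.

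For (a), Proposition \ref{claim2} gives $|S_i|\le\kappa^3$ for every $i$. Under full neighborhood exposure, $\pi_i^1=p^{|S_i|}\ge p^{\kappa^3}$ and $\pi_i^0\ge(1-p)^{\kappa^3}$. The $k$-absolute and $q$-fractional exposure events strictly contain the full-neighborhood exposure event, so the same lower bounds apply in all three cases. Setting $\pi_{\min}=\min(p,1-p)^{\kappa^3}$, step (b) then follows from elementary manipulations: $\pi_{ij}^{xx}\le\min(\pi_i^x,\pi_j^x)$ gives $(\pi_{ij}^{xx}-\pi_i^x\pi_j^x)/(\pi_i^x\pi_j^x)\le 1/\pi_{\min}$, and $\pi_{ij}^{10}\le\min(\pi_i^1,\pi_j^0)$ gives $|\pi_{ij}^{10}-\pi_i^1\pi_j^0|/(\pi_i^1\pi_j^0)\le 1/\pi_{\min}+1$.

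For step (c), observe that whenever $S_i\cap S_j=\emptyset$ the events $\{Z\in\sigma_i^x\}$ and $\{Z\in\sigma_j^y\}$ depend on disjoint collections of independent cluster coin flips, so $\pi_{ij}^{xy}=\pi_i^x\pi_j^y$ and the pair contributes zero to either double sum. It therefore suffices to count $|\{j\ne i:S_i\cap S_j\ne\emptyset\}|$. Fix any $C_\ell\in S_i$ with 3-net representative $v_\ell$; the proof of Proposition \ref{claim2} shows $C_\ell\subseteq B_2(v_\ell)$, so any $j$ with $C_\ell\in S_j$ must have $j$ or a neighbor of $j$ in $C_\ell$, i.e.\ $j\in B_1(C_\ell)\subseteq B_3(v_\ell)$. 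Applying the restricted-growth inequality twice, starting from $|B_1(v_\ell)|=d+1$, gives $|B_3(v_\ell)|\le\kappa^2(d+1)$. Summing over the at most $\kappa^3$ clusters in $S_i$ yields $|\{j:S_i\cap S_j\ne\emptyset\}|\le\kappa^5(d+1)=O(d)$.

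Putting the pieces together, each single sum in the expansion of $\Var[\hat Y^x(Z)]$ contributes $n$ terms of $O(1)$ size, each double sum in $\Var[\hat Y^x(Z)]$ and \eqref{eq:covar} has at most $n\cdot O(d)$ nonzero terms of $O(1)$ size, and the residual $\sum_i Y_i^1Y_i^0$ term in \eqref{eq:covar} is $O(n)$; all of these are scaled by $1/n^2$. Thus $\Var[\hat\tau(Z)]\le Cd/n$, which is linear in $d$. The main obstacle is step (c): a naive bound using only $|S_i|\le\kappa^3$ combined with cluster-size and per-vertex degree estimates would yield an $O(d^2)$ candidate count, producing a quadratic rather than linear degree dependence. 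The key observation that makes the argument linear is that the entire one-hop expansion $B_1(C_\ell)$ of a cluster is contained in a single radius-3 ball about the 3-net center $v_\ell$, so restricted growth bounds this union by $\kappa^2(d+1)$ in a single stroke rather than paying an extra factor of $d$ per vertex of $C_\ell$.
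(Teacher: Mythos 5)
Your proposal is correct and follows essentially the same route as the paper's proof: lower-bound each $\pi_i^x$ via Proposition \ref{claim2}, bound each pairwise ratio by a constant, and show each vertex has at most $\kappa^5(d+1)$ dependent partners using $C_\ell \subseteq B_2(v_\ell)$ and the restricted-growth inequality. The only cosmetic difference is in the counting step, where you union over the $\le \kappa^3$ clusters in $S_i$ and bound each $B_3(v_\ell)$ by $\kappa^2(d+1)$, whereas the paper observes that dependent vertices lie in $B_6(i)$ and applies the growth bound five times; both give the same $\kappa^5(d+1)$ count.
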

\begin{proof}
Recall that the variance of the estimator is given by:
$
\Var( \hat{\tau} (Z))= 
\Var(\hat{Y}^1) +
\Var(\hat{Y}^0) - 
 2 \Cov(\hat{Y}^1, \hat{Y}^0)
.
$
We begin by upper bounding the variance of $\hat{Y^1} (Z)$, and the upper bound
for $\hat{Y^0} (Z)$ follows the same principle. We conclude by bounding the covariance 
term.
By Proposition \ref{claim2},
each vertex is connected to at most $\kappa^3$ clusters. Thus we have the
lower bound $\pi^1_i \ge p^{\kappa^3}$, for both full and fractional 
neighborhood exposure.
\begin{eqnarray*}
\Var [ \hat{Y^1} (Z) ]
\le
\frac{Y_M^2}{n^2} 
 \Bigg [
n(\frac{1}{p^{\kappa^3}}-1) + 
\sum_{i=1}^{n} \sum_{\substack{j=1 \\ j\ne i}}^n (\frac{ \pi^1_{ij}}{\pi^1_i \pi^1_j} - 1) \Bigg ].
\end{eqnarray*}
For each vertex $i$, the inner of the two sums is only nonzero at those vertices $j$ for which the 
assignments are dependent. 
If the assignments for $i$ and $j$ are dependent, then they must
each have neighbors in the same cluster $C_h$ associated with a vertex
$v_h$ in the set of cluster centers.  
Since the proof of Proposition \ref{claim2}
established that $C_h \subseteq B_2(v_h)$, it follows that 
$i$ and $j$ are each within distance 3 of $v_h$ and hence within
distance 6 of each other.
Thus, any $j$ whose assignment is dependent on $i$'s must lie
within $B_6(i)$, and so by the restricted-growth condition,
there can be at most $|B_6(i)| \leq \kappa^5 |B_1(i)| = \kappa^5 (d+1)$
such vertices $j$.
Thus the sum over such $j$ has at most $\kappa^5 (d+1)$ terms.
Also, $\pi^1_{ij} \le p$ applies, 
since the two vertices must depend on at least one cluster.
We obtain
\begin{eqnarray*}
\Var [ \hat{Y^1} (Z) ]
\le
Y_M^2 
 [
(p^{-\kappa^3}-1) + 
\kappa^5 (d+1) (p^{-2\kappa^3-1} - 1) 
]\frac{1}{n}.
\end{eqnarray*}

Now, consider the contribution of the covariance term to the variance, $-2\Cov(\hat Y^1,\hat Y^0)$,
a positive quantity. 
Starting from equation (\ref{eq:covar}), we apply the upper bound for the responses
$Y_i$ to obtain
\begin{eqnarray*}
- 2 \Cov[\hat Y^1(Z),\hat Y^0(Z)] 
\le
- \frac{2Y_M^2}{n^2} 
\sum_{i=1}^{n} \sum_{\substack{j=1 \\ j\ne i}}^n \left ( \frac{\pi_{ij}^{10}}{\pi_i^1 \pi_j^0} - 1 \right )
+ \frac{2Y_M^2}{n}.
\end{eqnarray*}
As with the previous analogous expression, for each $i$ the inner sum is non-zero for at most $\kappa^5(d+1)$ 
other vertices $j$. For the remaining terms, the quantity $-(\pi_{ij}^{10}/(\pi^1_i \pi^0_j) - 1)$ is trivially upper bounded by $1$. 
Thus we obtain
\begin{eqnarray*}
-2\Cov[\hat Y^1(Z),\hat Y^0(Z)] 
\le
\frac{2Y_M^2}{n} 
[\kappa^5(d+1) + 1].
\end{eqnarray*}
Combining the upper bounds, we obtain a total upper bound that is linear in degree, as desired. 
\end{proof}

The restricted-growth condition we used was derived for regular graphs, but as we noted earlier, 
for restricted-growth graphs with arbitrary degree distributions we can apply a weaker but still constant
bound on the cluster dependencies to obtain a variance bound that is still linear in the degree.

\section{Conclusion}

The design of online experiments is a topic with many open directions
(see e.g. \cite{kohavi2012}); in this work we have
focused on the open question of A/B testing when treatment effects
can spill over along the links of an underlying social network.
We introduced a basic framework for reasoning about
this issue, as well as an algorithmic approach --- graph cluster
randomization --- for designing A/B randomizations of a population
when network spillover effects are anticipated.
Appropriate clustering can lead
to reductions in variance that are exponential in the vertex degrees.
We emphasize that beyond the class of graphs where we prove bounds,
graph cluster randomization is a technique that can be applied
to arbitrary graphs using arbitrary community detection or graph partitioning algorithms, 
though we do not provide any variance bound guarantees for these scenarios.

There are many further directions for research suggested by 
the framework developed here.
A first direction is to formulate a
computationally tractable objective function
for minimizing the variance of the Horvitz-Thompson estimator. 
One approach would be via
minimizing an adversarial variance, as in \cite{KLS2012}.
Another problem that may be relevant is to find
a clustering that minimizes A/A variance for full
neighborhood exposure under the assumption of known control potential outcomes. 
Can good clusterings for A/A variance
lead to good solutions for A/B testing?
We note that A/A variance minimization would not be useful 
when the treatment is expected to be dominated by heterogeneous responses.

Adding further structure to the potential treatment responses
is another interesting direction.
We currently have a discrete notion of
network exposure to treatment and control, but one could ask about 
responses that depend continuously on the {\em extent} of exposure.
As one simple example, we could consider a response that was linear in $k$,
when a vertex had $k$ exposed neighbors.
How could we properly take advantage of such structure to get 
better estimates? 
Methods for analyzing bias under network exposure condition 
misspecification would also be a natural addition to the framework.

%\bibliographystyle{abbrv}
%\bibliography{refs}

\vspace{-0.15cm}

\end{document}